\documentclass[10pt]{article}

\usepackage[margin=0.78in]{geometry}
\usepackage[T1]{fontenc}
\usepackage{amsmath,amssymb,amsfonts,mathtools}
\usepackage{amsthm}
\usepackage{graphicx}
\usepackage{booktabs}
\usepackage{placeins}
\usepackage{microtype}
\usepackage{enumitem}
\usepackage[dvipsnames]{xcolor}
\usepackage[colorlinks=true,allcolors=MidnightBlue]{hyperref}
\usepackage[round,authoryear]{natbib}

\setlist[itemize]{leftmargin=1.45em,itemsep=1pt,topsep=2pt}
\newtheorem{theorem}{Theorem}
\newtheorem{proposition}[theorem]{Proposition}
\newtheorem{lemma}[theorem]{Lemma}

\theoremstyle{remark}

\title{Physical Extinction and Long-Run Pricing\\
under Time-Varying Beliefs}
\author{Sourav Majumdar\\
\texttt{souravm@iitk.ac.in}\\
Department of Management Sciences, Indian Institute of Technology Kanpur}
\date{}

\begin{document}
\maketitle

\begin{abstract}
An investor may be optimistic about aggregate endowment growth at some times
and pessimistic at others. The weight placed on her forecast in bond valuation
can therefore vary across maturities. We study whether this maturity dependence
disappears at the long end of the yield curve. In a two-investor Arrow--Debreu economy,
physical extinction follows when \emph{total disagreement} grows without bound. We find that bond
valuation depends on net disagreement. If \emph{net disagreement} has no limit, neither the long-forward measure nor
the long bond exists. Bond yields can converge along the same belief path. The
long yield alone therefore cannot reveal whether valuation weights and bond
returns converge across maturities. We also obtain
finite-maturity error bounds for bond prices and forward densities.
\end{abstract}

\noindent\textbf{Keywords.} heterogeneous beliefs, physical extinction,
Hansen--Scheinkman factorization, long bond, long-forward measure

\noindent\textbf{MSC codes.} 91G30, 91B70, 60H30

\section{Introduction}\label{sec:introduction}

The price of a long-dated claim depends on how investors value future
consumption. Different growth forecasts imply different assessments of future
endowment paths. Through risk sharing, forecast errors alter consumption and
marginal utility. The influence of each forecast on equilibrium prices changes
with the resulting allocation.

Physical survival refers to an investor's share of aggregate consumption.
Valuation relevance concerns the effect of her growth forecast on current
prices. An additional unit of consumption has high marginal utility when
aggregate endowment is low. An investor who considers these states more likely
may therefore affect prices after her consumption share vanishes.

\citet{kogan2006price} and \citet{cvitanic2011price} establish this distinction.
We allow belief disagreement to change over time. A bond maturing at $T$
reflects the forecast errors accumulated before $T$. If their direction
changes, later errors can offset earlier ones and alter the relative weights
placed on the two forecasts across maturities. We ask whether these weights
converge and define a stable valuation law for increasingly distant payments.

A zero-coupon bond pays one unit of consumption at time $T$. Its yield is the
annualized discount rate implied by its price, and the long yield is the limit
of this rate as $T$ increases. The maturity-$T$ bond also defines valuation
weights for states at $T$. For a particular state, this weight is the price of
a claim that pays one unit of consumption when that state occurs and zero
otherwise, expressed as a share of the bond price. These weights define the
maturity-$T$ forward measure.

Suppose aggregate endowment falls before time $t$. Consider all states at
$T>t$ in which this decline has occurred. Their combined weight is the weight
that the maturity-$T$ forward measure assigns to the decline. This weight may
change with the payment time $T$. We ask whether it converges as $T$ moves
farther into the future. If the weights assigned to every event over each
finite time interval converge, their limits define the long-forward measure.
In our economy, this law shows how the two investors' forecasts enter long-run
pricing.

Bond prices also determine the returns on investments in bonds of different
maturities. The time-$t$ price of a maturity-$T$ bond relative to its time-zero
price is the value at $t$ of one unit invested in the bond initially. If these
return processes converge as $T$ increases, their limit is the long bond. When
the long-forward measure and long bond exist, they identify the
Hansen--Scheinkman decomposition of the stochastic discount factor
\citep{hansen2009long,qin2017long}. Distant bonds may share the same limiting
average discount rate. Their
valuation weights and returns can still depend on the payment time. A long
yield may therefore exist in the absence of a long-forward measure and long
bond.

We study these issues in a two-investor Arrow--Debreu economy. The investors
differ only in their forecasts of aggregate endowment growth and share CRRA
preferences and a common discount rate. The rational investor uses the
probability law that generates aggregate endowment. At time $t$, the
dogmatist's expected-growth forecast differs from the rational investor's
forecast by $\sigma^2\eta_t$, where $\sigma$ is the volatility of the endowment process and the path $\eta_t$ is deterministic denoting the evolution of the belief of the investor.

We describe the history of forecast disagreement using
\begin{equation}\label{eq:intro-statistics}
 A_t:=\int_0^t\eta_s^2\,\mathrm{d} s,
 \qquad
 H_t:=\int_0^t\eta_s\,\mathrm{d} s.
\end{equation}
We call $A_t$ total disagreement and $H_t$ net disagreement.
Total disagreement accumulates squared forecast errors. Net disagreement carries their signs, so errors in opposite directions can offset one another.
The distinction between these quantities is crucial to our analysis of physical
extinction and long-run pricing. For $\gamma=1$, we derive exact
finite-maturity formulas and use them in the numerical illustrations.

\subsection{Relation to the literature}

\citet{sandroni2000markets} provide the benchmark for our extinction result.
Their investors share a discount factor and can trade a complete set of
contingent claims. Accurate forecasters retain consumption of the endowment over time.
Repeated forecast errors cause an investor's consumption share to vanish.
\citet{blume2006smart} show that this prediction can change once investors
cannot trade a complete set of claims. The securities available to trade then affect
which beliefs survive. \citet{yan2008natural} proves that an investor survives if and only if her survival index is
the lowest in the economy. The index includes the squared forecast error and
the time discount rate. The CRRA coefficient enters through aggregate growth.
With common preferences, only investors with the most accurate forecasts can
survive. Yan holds each investor's forecast error constant over time. In our model, the
investors share CRRA preferences and a common discount rate, and the forecast
error changes over time. Proposition \ref{prop:equilibrium} states the
resulting extinction criterion in terms of total disagreement $A_t$. The
equilibrium allocation then presents our analysis of
long-run pricing. \citet{borovicka2020survival} shows that this conclusion depends on
the form of common preferences. Under homothetic recursive preferences, an
investor with a less accurate forecast can dominate when risk aversion exceeds
the inverse of the intertemporal elasticity of substitution.

\citet{kogan2006price} show how price influence can outlast physical survival.
An investor whose wealth share approaches zero may still affect state and stock
prices when her consumption is concentrated in rare states with low aggregate
consumption. In these states, an additional unit of consumption has high
marginal utility. An investor's assessment of the likelihood of these states
can therefore affect current prices. Our question concerns the convergence of
the valuation law across increasingly distant bond maturities.

\citet{cvitanic2011price} use the $T$-forward measure associated with the
maturity-$T$ bond to show that an investor can affect bond risk after her
consumption share converges to zero under the physical law. Their result is
maturity-specific. Theorem \ref{thm:main} derives a pathwise condition for
forward measures and normalized bond prices to converge as the payment time
increases. In our economy, total disagreement $A_T$ determines whether the
dogmatist's consumption share converges to zero. For a bond maturing at $T$,
her relative contribution to valuation depends on net disagreement $H_T$. A
change in the sign of the forecast error can therefore alter her contribution
across bond maturities even when her consumption share converges to zero.

\citet{malamud2008long} show that a distant bond is valued asymptotically by
the investor whose expected marginal utility assigns it the highest value.
This investor has the lowest yield in the corresponding single-investor
economy. \citet{cvitanic2012financial} obtain a related long-yield result from
investors' saving motives. They explain how physical survival affects
which investor influences different parts of the forward curve. When the
difference between the two growth forecasts is constant, the investor with the
lower single-investor long yield determines the equilibrium long yield.

For every
$t<T$, Theorem \ref{thm:main}(i) approximates the bond price and forward
density by the rational and dogmatist benchmark components. It also bounds the
expected absolute error in the forward-density approximation. Net disagreement $H_{t,T}$ determines the
relative contribution of each component. This finite-maturity result
complements the asymptotic bond-yield comparisons in
\citet{malamud2008long} and \citet{cvitanic2012financial}.

\citet{hansen2009long} derive a factorization of a Markovian stochastic
discount factor from a positive eigenfunction of the pricing operator. The
construction produces a martingale component that defines a change of
probability measure. Positive eigenfunctions need not be unique.
\citet{qin2016positive} prove that the recurrent eigenfunction, when it exists,
is unique. In our space-time representation, the time coordinate increases
deterministically, and therefore cannot satisfy this
recurrence condition. We use the maturity-limit approach of
\citet{qin2017long} to identify the martingale component. They characterize
the long-forward measure through the limit of $T$-forward measures and the
long bond through the limit of normalized prices of increasingly long-maturity
bonds. \citet{qin2018hjm} give sufficient conditions for these limits to exist
in Heath--Jarrow--Morton models.

Theorem \ref{thm:main}(ii) applies these maturity limits to time-varying
disagreement. Physical extinction alone does not ensure convergence of
forward measures or normalized bond prices. Suppose total disagreement
diverges and net disagreement has no limit. The dogmatist's consumption share
converges to zero, yet her valuation weight does not converge across
maturities. The forward measures and normalized bond prices consequently have
no limit. If net disagreement approaches a finite value or grows without bound
in one direction, the theorem identifies the martingale component in the
resulting Hansen--Scheinkman factorization. Part (iii) gives the long yield
under the additional condition that $H_T/T$ converges. Net disagreement may
lack a limit even when this average converges to zero. Along such paths, the
long yield equals $r^R$, and no long-forward measure or long bond exists.

\section{The economy and physical extinction}\label{sec:model}

We work on a probability space
$(\Omega,\mathcal F,(\mathcal F_t)_{t\geq0},P)$. The filtration is generated by
the Brownian motion $(B_t)_{t\geq0}$. Aggregate endowment follows
\begin{equation}\label{eq:endowment}
 \frac{\,\mathrm{d} D_t}{D_t}=\mu\,\mathrm{d} t+\sigma\,\mathrm{d} B_t,
 \qquad D_0>0.
\end{equation}
The constant $\mu\in\mathbb R$ is its expected growth rate, and $\sigma>0$ is
its volatility.

The rational investor is denoted by $R$ and the dogmatist by $N$. Investor
$i\in\{R,N\}$ evaluates future endowment paths under the probability measure
$Q^i$ and consumes $c_{i,t}$ at time $t$. Both investors have time-additive
CRRA preferences with discount rate $\rho>0$ and relative risk aversion
$\gamma>0$. Investor $i$'s expected utility is
\begin{equation}\label{eq:preferences}
 \mathbb{E}^{Q^i}\left[\int_0^\infty e^{-\rho t}u(c_{i,t})\,\mathrm{d} t\right],
 \qquad
 u(c)=\begin{cases}
 c^{1-\gamma}/(1-\gamma),&\gamma\neq1,\\
 \log c,&\gamma=1.
 \end{cases}
\end{equation}

The rational investor's beliefs coincide with the physical probability
measure, so $Q^R=P$. The dogmatist's expected growth forecast differs from the
rational investor's forecast by $\sigma^2\eta_t$, where $\eta$ is a
deterministic function satisfying
\begin{equation}\label{eq:eta-integrability}
 \int_0^T\eta_s^2\,\mathrm{d} s<\infty,
 \quad \forall T>0.
\end{equation}
For $0\leq t\leq T$, define
\begin{equation}\label{eq:disagreement-increments}
 A_{t,T}:=\int_t^T\eta_s^2\,\mathrm{d} s,
 \qquad
 H_{t,T}:=\int_t^T\eta_s\,\mathrm{d} s,
 \qquad A_t:=A_{0,t},\quad H_t:=H_{0,t}.
\end{equation}
We call $A_{t,T}$ total disagreement and $H_{t,T}$ net disagreement over
$[t,T]$. The same names apply to $A_t$ and $H_t$ over $[0,t]$.
The dogmatist's probability measure is defined on each $\mathcal F_t$ by the
likelihood ratio
\begin{equation}\label{eq:likelihood}
 \xi_t:=\left.\frac{\,\mathrm{d} Q^N}{\,\mathrm{d} P}\right|_{\mathcal F_t}
 =\exp\left\{\sigma\int_0^t\eta_s\,\mathrm{d} B_s
 -\frac{\sigma^2}{2}A_t\right\}.
\end{equation}
Since $\eta_t$ is deterministic, condition \eqref{eq:eta-integrability} makes
$\xi_t$ a martingale with mean one on every finite horizon. The measures $Q^N$
and $P$ are therefore equivalent on each $\mathcal F_t$.
The process $(B_t^N)_{t\geq0}$ defined by
$B_t^N:=B_t-\sigma H_t$ is a Brownian motion under $Q^N$ by Girsanov's
theorem. Hence the dogmatist investor perceives the expected endowment growth rate to be
$\mu+\sigma^2\eta_t$.

We consider an Arrow--Debreu economy, as in
\citet[Section 2]{cvitanic2012financial}. Each investor chooses a consumption of the endowment, $c_i=(c_{i,t})_{t\geq0}$, to maximize \eqref{eq:preferences} subject to
\begin{equation}\label{eq:consumption-value}
 \mathbb{E}^P\left[\int_0^\infty\pi_tc_{i,t}\,\mathrm{d} t\right]\leq W_i,
\end{equation}
where $\pi_t$ is a strictly positive state-price density and $W_i>0$ is investor
$i$'s initial wealth. We impose a feasibility constraint requiring $c_{R,t}+c_{N,t}=D_t$ at each time.

The parameter
\[
 w_0:=\frac{c_{N,0}}{c_{R,0}}>0
\]
determines the initial division of consumption between the two investors.
Proposition \ref{prop:equilibrium} gives the equilibrium allocation for each
value of $w_0$.

We impose the following conditions to ensure finite equilibrium wealth. For
$\gamma\neq1$, assume
\begin{equation}\label{eq:finite-power}
 \int_0^\infty e^{-\rho t}\mathbb{E}^P\left[
 (1+\xi_t)\left(\frac{D_t}{D_0}\right)^{1-\gamma}
 \right]\,\mathrm{d} t<\infty.
\end{equation}
For $\gamma=1$, assume
\begin{equation}\label{eq:finite-log}
 \int_0^\infty e^{-\rho t}\mathbb{E}^P\left[(1+\xi_t)
 \left\{1+\left|\log\frac{D_t}{D_0}\right|+|\log\xi_t|\right\}
 \right]\,\mathrm{d} t<\infty.
\end{equation}

\begin{proposition}[Arrow--Debreu equilibrium and physical extinction]
\label{prop:equilibrium}
Fix an initial consumption ratio $w_0>0$. Define
\begin{equation}\label{eq:sharing}
 \frac{c_{N,t}}{c_{R,t}}=w_t:=w_0\xi_t^{1/\gamma},
 \qquad
 c_{R,t}=\frac{D_t}{1+w_t},\qquad
 c_{N,t}=\frac{w_tD_t}{1+w_t}.
\end{equation}
With $\pi_0=1$, the equilibrium state-price density is
\begin{equation}\label{eq:sdf}
 \pi_t=e^{-\rho t}\left(\frac{D_t}{D_0}\right)^{-\gamma}
 \left(\frac{1+w_t}{1+w_0}\right)^\gamma.
\end{equation}
Under \eqref{eq:finite-power} or \eqref{eq:finite-log}, as applicable, the
wealth levels
\begin{equation}\label{eq:equilibrium-wealth}
 W_i(w_0):=\mathbb{E}^P\left[\int_0^\infty\pi_tc_{i,t}\,\mathrm{d} t\right],
 \qquad i\in\{R,N\},
\end{equation}
are finite and strictly positive. The allocation in \eqref{eq:sharing} and the
state-price density in \eqref{eq:sdf} form an Arrow--Debreu equilibrium for
initial wealth levels $W_i=W_i(w_0)$.

The dogmatist's consumption share converges to zero $P$-almost surely if and
only if $A_t\to\infty$. In that case,
\begin{equation}\label{eq:extinction-rate}
 \frac{\log w_t}{A_t}\longrightarrow-\frac{\sigma^2}{2\gamma}
 \qquad P\text{-a.s.},
\end{equation}
\end{proposition}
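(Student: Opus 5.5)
The plan has three parts: the equilibrium, finiteness and positivity of wealth, and extinction.

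\emph{Equilibrium.} I would use the standard sufficient conditions for optimality in a complete Arrow--Debreu market. Investor $i$'s problem is to maximize $\mathbb{E}^P[\int_0^\infty e^{-\rho t}\zeta^i_t u(c_{i,t})\,\mathrm{d} t]$ subject to \eqref{eq:consumption-value}, where $\zeta^R_t=1$ and $\zeta^N_t=\xi_t$. The first-order conditions are
\[
 e^{-\rho t}c_{R,t}^{-\gamma}=\lambda_R\pi_t,
 \qquad
 e^{-\rho t}\xi_tc_{N,t}^{-\gamma}=\lambda_N\pi_t .
\]
Dividing them gives $w_t=(\lambda_R/\lambda_N)^{1/\gamma}\xi_t^{1/\gamma}$, and $\xi_0=1$ identifies the constant as $w_0$. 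Feasibility then yields \eqref{eq:sharing}. Substituting $c_{R,t}=D_t/(1+w_t)$ into the rational investor's condition, with the normalization $\pi_0=1$ fixing $\lambda_R=(D_0/(1+w_0))^{-\gamma}$, gives \eqref{eq:sdf}.

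For sufficiency, I would use concavity of $u$: $u(c)\le u(c^*)+u'(c^*)(c-c^*)$ for every budget-feasible $c$. Integrating against $e^{-\rho t}\zeta^i_t$ reduces the comparison to the budget constraint, since $u'(c^*_{i,t})e^{-\rho t}\zeta^i_t=\lambda_i\pi_t$ and the budget binds at $W_i(w_0)$. This step needs $W_i(w_0)<\infty$ and integrability of the candidate utility. The latter matters when $\gamma=1$, where $|\log c|$ must be controlled, and this is exactly what \eqref{eq:finite-log} provides.

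\emph{Finiteness of wealth.} This is the main computation. Since $w_t=w_0\xi_t^{1/\gamma}$,
\[
 \pi_tc_{R,t}=e^{-\rho t}D_0^{\gamma}(1+w_0)^{-\gamma}D_t^{1-\gamma}(1+w_t)^{\gamma-1},
 \qquad
 \pi_tc_{N,t}=w_t\,\pi_tc_{R,t}.
\]
I need to bound $(1+w_t)^{\gamma-1}$ and $w_t(1+w_t)^{\gamma-1}$ by $C(1+\xi_t)$.
\begin{itemize}
\item For $\gamma\ge1$, use $(1+w)^{\gamma-1}\le C_\gamma(1+w^{\gamma-1})$ and $w^{\gamma-1}\le 1+w^{\gamma}$. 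Since $w^\gamma=w_0^\gamma\xi_t$, both quantities are at most $C(1+\xi_t)$.
\item For $\gamma<1$, use $(1+w)^{\gamma-1}\le1$ and $w(1+w)^{\gamma-1}\le w^{\gamma}=w_0^\gamma\xi_t$.
\end{itemize}
Then \eqref{eq:finite-power} gives finiteness. For $\gamma=1$, $\pi_tc_{R,t}$ and $\pi_tc_{N,t}$ are bounded by $Ce^{-\rho t}(1+\xi_t)$, so finiteness is immediate. The log terms in \eqref{eq:finite-log} are used only for the utility integrability above, via $|\log c_{i,t}|\le|\log D_t|+\log(1+w_t)+|\log w_t|$, with $\log(1+w_t)\le C(1+|\log\xi_t|)$. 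Strict positivity of $W_i$ is clear because the integrands are strictly positive.

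\emph{Extinction.} The consumption share of the dogmatist is $w_t/(1+w_t)$, and
\[
 \log w_t=\log w_0+\frac{1}{\gamma}\left(\sigma M_t-\frac{\sigma^2}{2}A_t\right),
 \qquad
 M_t=\int_0^t\eta_s\,\mathrm{d} B_s .
\]
Here $M$ is a continuous Gaussian martingale with $\langle M\rangle_t=A_t$.

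If $A_t\to\infty$, the strong law for continuous martingales (Dambis--Dubins--Schwarz, writing $M_t=\beta_{A_t}$ with $\beta_u/u\to0$) gives $M_t/A_t\to0$ almost surely. This yields \eqref{eq:extinction-rate} and hence $w_t\to0$.

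If instead $A_t\uparrow A_\infty<\infty$, then $M_t$ converges almost surely to a finite Gaussian limit, by $L^2$-bounded martingale convergence. So $\log w_t$ converges to a finite random variable, and the share converges to a strictly positive limit almost surely. In particular it does not converge to zero. This proves the equivalence.

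I expect the main obstacle to be the $\gamma=1$ sufficiency argument, where the stated log-integrability condition must be matched carefully to the terms appearing in $u(c^*_{i,t})$.
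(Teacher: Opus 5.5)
Your proposal is correct and follows essentially the paper's argument. The steps match: you check the marginal-utility conditions for the allocation and $\pi$, bound the wealth and utility integrands via \eqref{eq:finite-power} or \eqref{eq:finite-log}, obtain optimality from concavity, and prove extinction and its converse via Dambis--Dubins--Schwarz and $L^2$-bounded martingale convergence. The remaining differences are cosmetic. The paper bounds $\pi_tc_{i,t}\le\pi_tD_t$ in one step using $(1+x)^\gamma\le C(1+x^\gamma)$ rather than your case split. It also bounds the $\gamma\neq1$ utility terms explicitly, whereas in your approach this follows from the first-order identity $e^{-\rho t}\zeta^i_tu(c_{i,t})=\lambda_i\pi_tc_{i,t}/(1-\gamma)$, which you should state.
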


\begin{proof}
Under $P$, the rational investor's and the dogmatist's objective functions are
\begin{align*}
 U_R(c_R)&=\mathbb{E}^P\left[\int_0^\infty
 e^{-\rho t}u(c_{R,t})\,\mathrm{d} t\right],\\
 U_N(c_N)&=\mathbb{E}^P\left[\int_0^\infty
 e^{-\rho t}\xi_tu(c_{N,t})\,\mathrm{d} t\right].
\end{align*}
The likelihood ratio $\xi_t$ is present in the second objective function since the
dogmatist evaluates consumption under $Q^N$.
For the allocation in \eqref{eq:sharing}, set
$y_R=c_{R,0}^{-\gamma}$ and $y_N=c_{N,0}^{-\gamma}$. Since
$c_{N,0}=w_0c_{R,0}$, these positive constants satisfy
$y_N=w_0^{-\gamma}y_R$. Define
\[
 \pi_t:=y_R^{-1}e^{-\rho t}c_{R,t}^{-\gamma}.
\]
Then $\pi_0=1$, and substitution from \eqref{eq:sharing} gives
\eqref{eq:sdf}. The identity $w_t^\gamma=w_0^\gamma\xi_t$ then gives
\begin{equation}\label{eq:budget-first-order}
 e^{-\rho t}u'(c_{R,t})=y_R\pi_t,
 \qquad
 e^{-\rho t}\xi_tu'(c_{N,t})=y_N\pi_t.
\end{equation}
These are the marginal-utility conditions for the two individual consumption
problems.

We next verify finiteness. For $\gamma\neq1$, the identity
$w_t^\gamma=w_0^\gamma\xi_t$ and the bound
$(1+x)^\gamma\leq C(1+x^\gamma)$ give
\[
 0<\pi_tc_{i,t}\leq\pi_tD_t
 \leq C e^{-\rho t}(1+\xi_t)
 \left(\frac{D_t}{D_0}\right)^{1-\gamma},
\]
and
\begin{align*}
 e^{-\rho t}\left(c_{R,t}^{1-\gamma}
 +w_0^\gamma\xi_tc_{N,t}^{1-\gamma}\right)
 &=e^{-\rho t}D_t^{1-\gamma}(1+w_t)^\gamma\\
 &\leq C e^{-\rho t}(1+\xi_t)
 \left(\frac{D_t}{D_0}\right)^{1-\gamma}.
\end{align*}
For $\gamma=1$, substitution of $w_t=w_0\xi_t$ gives
\[
 \pi_tD_t+e^{-\rho t}
 \left(|\log c_{R,t}|+\xi_t|\log c_{N,t}|\right)
 \leq C e^{-\rho t}(1+\xi_t)
 \left(1+\left|\log\frac{D_t}{D_0}\right|+|\log\xi_t|\right).
\]
The applicable integrability condition therefore makes both utilities and the
wealth levels in \eqref{eq:equilibrium-wealth} finite. The wealth levels are
strictly positive since $\pi_t$ and $c_{i,t}$ are strictly positive.

We next verify individual optimality. Let $\widetilde c_R$ and
$\widetilde c_N$ be consumption plans that satisfy the respective budget constraints and
have well-defined utility. Concavity of $U$ and
\eqref{eq:budget-first-order} give
\begin{align*}
 U_R(\widetilde c_R)-U_R(c_R)
 &\leq y_R\mathbb{E}^P\left[\int_0^\infty\pi_t
 (\widetilde c_{R,t}-c_{R,t})\,\mathrm{d} t\right]\leq0,\\
 U_N(\widetilde c_N)-U_N(c_N)
 &\leq y_N\mathbb{E}^P\left[\int_0^\infty\pi_t
 (\widetilde c_{N,t}-c_{N,t})\,\mathrm{d} t\right]\leq0.
\end{align*}
Thus each investor optimally chooses the allocation in \eqref{eq:sharing}.
Equation \eqref{eq:sharing} also gives
$c_{R,t}+c_{N,t}=D_t$ at every time and state.

For the extinction claim, define the continuous local martingale
\[
 \mathcal M_t:=\int_0^t\eta_s\,\mathrm{d} B_s,
 \qquad \langle\mathcal M\rangle_t=A_t.
\]
The likelihood ratio and \eqref{eq:sharing} give
\[
 \log w_t=\log w_0+\frac{\sigma}{\gamma}
 \mathcal M_t-\frac{\sigma^2}{2\gamma}A_t.
\]
When $A_t\to\infty$, the Dambis--Dubins--Schwarz theorem gives a Brownian motion
$(\widetilde B_s)_{s\geq0}$ such that
$\mathcal M_t=\widetilde B_{A_t}$. The Brownian strong law gives
$\widetilde B_s/s\to0$ almost surely. Hence
$\mathcal M_t/A_t\to0$ almost surely. The identity
\[
 \frac{\log w_t}{A_t}
 =\frac{\log w_0}{A_t}+\frac{\sigma}{\gamma}\frac{\mathcal M_t}{A_t}
 -\frac{\sigma^2}{2\gamma}
\]
proves \eqref{eq:extinction-rate}. It also gives
$\log w_t\to-\infty$, so $w_t/(1+w_t)\to0$ almost surely. If $A_t$ has a finite
limit, then $\mathcal M_t$ converges almost surely to a finite random variable.
The expression for $\log w_t$ then has a finite limit, and the limiting
consumption share lies strictly between zero and one. This proves the converse.
\end{proof}

\section{Hansen--Scheinkman factorizations and long-run pricing}\label{sec:pricing}

Proposition \ref{prop:equilibrium} shows that the dogmatist's share of aggregate
consumption vanishes when total disagreement, $A_t$, becomes large. Her belief
may still influence distant prices. We study this influence through the
Hansen--Scheinkman factorization \citep{hansen2009long,qin2017long}.

To obtain the benchmark quantities, consider each investor in isolation, so
consumption equals $D_t$. The rational investor's marginal-utility state-price
density is $e^{-\rho t}(D_t/D_0)^{-\gamma}$. Expressing the dogmatist's density
under $P$ multiplies this term by $\xi_t$. Equations \eqref{eq:endowment} and
\eqref{eq:likelihood} give
\begin{align*}
 e^{-\rho t}(D_t/D_0)^{-\gamma}
 &=e^{-r^Rt}M_t^R,\\
 e^{-\rho t}\xi_t(D_t/D_0)^{-\gamma}
 &=e^{-r^Rt-\gamma\sigma^2H_t}M_t^N.
\end{align*}
The rational-investor benchmark yield is
\begin{equation}\label{eq:rR}
 r^R:=\rho+\gamma\mu-\frac12\gamma(\gamma+1)\sigma^2.
\end{equation}
The corresponding rational and dogmatist martingale components are
\begin{align}
 M_t^R&:=\exp\left\{-\gamma\sigma B_t
 -\frac12\gamma^2\sigma^2t\right\},
 \label{eq:MR}\\
 M_t^N&:=\exp\left\{\sigma\int_0^t(\eta_s-\gamma)\,\mathrm{d} B_s
 -\frac{\sigma^2}{2}\int_0^t(\eta_s-\gamma)^2\,\mathrm{d} s\right\}.
 \label{eq:MN}
\end{align}
Since $M^R$ has mean one, the rational benchmark price of a bond maturing at $T$ is
$e^{-r^RT}$, which gives the constant yield $r^R$. The two martingale
components define the benchmark forward laws $P^{L,R}$ and $P^{L,N}$. Identifying a
long-forward law with either probability law requires
additional restrictions \citep{borovicka2016misspecified}.

Combining \eqref{eq:endowment} and \eqref{eq:likelihood} with
Proposition \ref{prop:equilibrium} gives
\begin{equation}\label{eq:pricing-identities}
 \pi_t=\frac{e^{-r^Rt}}{(1+w_0)^\gamma}M_t^R(1+w_t)^\gamma,
 \qquad
 M_t^Rw_t^\gamma=w_0^\gamma e^{-\gamma\sigma^2H_t}M_t^N.
\end{equation}
The second identity shows how net disagreement changes the relative weight of
the dogmatist martingale component. 

Since $\eta_t$ varies with time, we use $(s,w)$ as the state. For any payoff
$F$ with finite expectation, define the time-homogeneous pricing operator
\begin{equation}\label{eq:space-time-operator}
 (\mathcal{P}_\tau F)(s,w):=
 \mathbb{E}_{s,w}^P\left[\frac{\pi_{s+\tau}}{\pi_s}
 F(s+\tau,w_{s+\tau})\right].
\end{equation}
Here $\mathbb{E}_{s,w}^P$ conditions on $w_s=w$. The operator gives the time-$s$ price
of $F(s+\tau,w_{s+\tau})$ delivered at $s+\tau$.

\begin{lemma}[Candidate Hansen--Scheinkman factorizations]
\label{lem:factorizations}
For $a,b\geq0$ with $a+b>0$ and $(s,w)\in[0,\infty)\times(0,\infty)$, define
\begin{equation}\label{eq:Phi-ab}
 \Phi_{a,b}(s,w):=
 \frac{a+b e^{\gamma\sigma^2H_s}w^\gamma}{(1+w)^\gamma},
 \qquad
 M_t^{a,b}:=\frac{aM_t^R+bw_0^\gamma M_t^N}{a+bw_0^\gamma}.
\end{equation}
Then $M^{a,b}$ is a positive martingale component with mean one. The positive
function $\Phi_{a,b}$ satisfies the eigenfunction relation
\begin{equation}\label{eq:eigen-relation}
 \mathcal{P}_\tau\Phi_{a,b}(s,w)=e^{-r^R\tau}\Phi_{a,b}(s,w).
\end{equation}
The state-price density admits the exact factorization
\begin{equation}\label{eq:exact-factorization}
 \pi_t=e^{-r^Rt}M_t^{a,b}
 \frac{\Phi_{a,b}(0,w_0)}{\Phi_{a,b}(t,w_t)}.
\end{equation}
The choices $(a,b)=(1,0)$ and $(0,1)$ give the rational and dogmatist cases,
respectively. Positive $a$ and $b$ give a convex combination of the benchmark
martingale components. In each case, $M^{a,b}$ is the candidate permanent
martingale component and the $\Phi_{a,b}$-ratio is the transitory component.
\end{lemma}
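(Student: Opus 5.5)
The plan is to derive everything from the two identities in \eqref{eq:pricing-identities}, so that the lemma reduces to algebra plus the martingale property of the benchmark components.

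\textbf{Step 1: martingale properties.} First I check that $M^R$ and $M^N$ are positive true martingales with mean one.
\begin{itemize}
\item $M^R$ is the stochastic exponential of the constant integrand $-\gamma\sigma$, so this is immediate.
\item $M^N$ is the stochastic exponential of $\sigma(\eta_s-\gamma)$. This integrand is deterministic and square integrable on every $[0,T]$ by \eqref{eq:eta-integrability}. Hence $\int_0^t\sigma(\eta_s-\gamma)\,\mathrm{d}B_s$ is Gaussian, and the exponential has mean one at every $t$ and is a martingale, exactly as for $\xi_t$.
\end{itemize}
Since $a,b\ge0$, $a+b>0$ and $w_0>0$, the denominator $a+bw_0^\gamma$ is strictly positive. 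So $M^{a,b}$ is a nonnegative combination, not both weights zero, of positive mean-one martingales, normalized by the sum of the weights. It is therefore a positive martingale with $M^{a,b}_0=1$.

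\textbf{Step 2: exact factorization.} Next I verify \eqref{eq:exact-factorization} pathwise. By definition of $\Phi_{a,b}$,
\[
M_t^R(1+w_t)^\gamma\,\Phi_{a,b}(t,w_t)=aM_t^R+b\,e^{\gamma\sigma^2H_t}M_t^Rw_t^\gamma .
\]
By the second identity in \eqref{eq:pricing-identities}, $e^{\gamma\sigma^2H_t}M_t^Rw_t^\gamma=w_0^\gamma M_t^N$. Hence
\[
M_t^R(1+w_t)^\gamma=\frac{(a+bw_0^\gamma)\,M_t^{a,b}}{\Phi_{a,b}(t,w_t)}.
\]
At $t=0$ we have $H_0=0$, so $\Phi_{a,b}(0,w_0)=(a+bw_0^\gamma)/(1+w_0)^\gamma$. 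Substituting into the first identity in \eqref{eq:pricing-identities} gives $\pi_t=e^{-r^Rt}M_t^{a,b}\Phi_{a,b}(0,w_0)/\Phi_{a,b}(t,w_t)$, which is \eqref{eq:exact-factorization}.

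\textbf{Step 3: eigenfunction relation.} Applying the factorization at times $s$ and $s+\tau$ and dividing, the $\Phi_{a,b}(0,w_0)$ factors cancel:
\[
\frac{\pi_{s+\tau}}{\pi_s}\,\Phi_{a,b}(s+\tau,w_{s+\tau})=e^{-r^R\tau}\,\frac{M^{a,b}_{s+\tau}}{M^{a,b}_s}\,\Phi_{a,b}(s,w_s).
\]
Taking $\mathbb{E}^P[\,\cdot\mid\mathcal F_s]$ and using the martingale property from Step 1 gives $e^{-r^R\tau}\Phi_{a,b}(s,w_s)$.

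\textbf{Main obstacle.} The delicate point is that $\mathcal P_\tau$ conditions only on $w_s=w$ and is meant to depend on $(s,w)$ alone. This requires $(s,w_t)$ to be Markov.
\begin{itemize}
\item From $\log w_t=\log w_0+\frac{\sigma}{\gamma}\mathcal M_t-\frac{\sigma^2}{2\gamma}A_t$, the increment of $\log w$ over $[s,s+\tau]$ is a deterministic function of the Brownian increments after $s$, because $\eta$ is deterministic.
\item Likewise, the increments of $D$ and $M^{a,b}$ over $[s,s+\tau]$ depend only on the Brownian increments after $s$, together with the time-$s$ state.
\end{itemize}
Hence the conditional expectation given $\mathcal F_s$ equals a measurable function of $(s,w_s)$, and conditioning on $w_s=w$ is legitimate. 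To make this airtight I would write $M^{a,b}_{s+\tau}/M^{a,b}_s$ as a combination of $M^R_{s+\tau}/M^R_s$ and $M^N_{s+\tau}/M^N_s$ with $(s,w_s)$-measurable weights.

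Finite expectation of the payoff $\Phi_{a,b}$ is automatic from the same identity, since the right-hand side has finite mean. Finally, the special cases $(1,0)$ and $(0,1)$ are read off directly from the definitions.
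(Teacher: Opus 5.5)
Your proof is correct and matches the paper's argument essentially step for step. You establish that $M^R$ and $M^N$ are positive mean-one martingales, obtain the pathwise factorization from the two identities in \eqref{eq:pricing-identities}, and derive the eigenfunction relation by taking the ratio at $s$ and $s+\tau$ and conditioning on $\mathcal F_s$; your additional Markov discussion is more careful than the paper, which simply conditions on $w_s=w$, though for this particular $\Phi_{a,b}$ the tower property already suffices because $\mathbb{E}^P\bigl[\tfrac{\pi_{s+\tau}}{\pi_s}\Phi_{a,b}(s+\tau,w_{s+\tau})\bigm|\mathcal F_s\bigr]=e^{-r^R\tau}\Phi_{a,b}(s,w_s)$ is already a function of $w_s$.
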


\begin{proof}
Under \eqref{eq:eta-integrability}, the martingale components $M^R$ and $M^N$
are positive exponential martingales with mean one on every finite horizon.
Hence $M^{a,b}$ is a convex combination of the two martingale components, and
$\Phi_{a,b}$ is positive. The two
identities in \eqref{eq:pricing-identities} give
\[
 \pi_t\Phi_{a,b}(t,w_t)
 =\frac{e^{-r^Rt}}{(1+w_0)^\gamma}
 \left(aM_t^R+bw_0^\gamma M_t^N\right)
 =e^{-r^Rt}\Phi_{a,b}(0,w_0)M_t^{a,b}.
\]
This proves \eqref{eq:exact-factorization}. Taking the ratio of this identity
at times $s+\tau$ and $s$ gives
\[
 \frac{\pi_{s+\tau}}{\pi_s}
 \Phi_{a,b}(s+\tau,w_{s+\tau})
 =e^{-r^R\tau}\Phi_{a,b}(s,w_s)
 \frac{M_{s+\tau}^{a,b}}{M_s^{a,b}}.
\]
From the martingale property we can conclude
\[
 \mathbb{E}^P\left[\left.\frac{M_{s+\tau}^{a,b}}{M_s^{a,b}}
 \right|\mathcal F_s\right]=1.
\]
Conditional expectation given $w_s=w$ now proves
\eqref{eq:eigen-relation}.
\end{proof}

When long-maturity bonds and forward measures converge, their limits determine
the factorization relevant to long-run pricing. For $0\leq t<T$, set $\tau:=T-t$ and
$M_{t,T}^j:=M_T^j/M_t^j$ for $j\in\{R,N\}$. We define the ratio of the dogmatist and rational valuation terms by,
\begin{equation}\label{eq:q-conditional}
 q_{t,T}:=w_t^\gamma e^{-\gamma\sigma^2H_{t,T}}.
\end{equation}

A zero-coupon bond pays one unit of consumption at $T$. Its time-$t$ price
$P_t^T$ and the conditional density $\widehat Z_{t,T}$ of the maturity-$T$
forward measure relative to $P$ are
\begin{equation}\label{eq:bond-forward-definitions}
 P_t^T:=\mathbb{E}^P\left[\frac{\pi_T}{\pi_t}\,\middle|\,\mathcal F_t\right],
 \qquad
 \widehat Z_{t,T}:=\frac{\pi_T/\pi_t}{P_t^T}.
\end{equation}
$\widehat Z_{t,T}$ is nonnegative and $\mathbb{E}^P[\widehat Z_{t,T}\mid\mathcal F_t]=1$. For any payoff $X_T$, its
conditional expectation under the maturity-$T$ forward measure is
$\mathbb{E}^P[\widehat Z_{t,T}X_T\mid\mathcal F_t]$.

To compare forward measures as $T$ increases, fix $t$ and consider each
measure on the same $\mathcal F_t$. Its density relative to
$P$ is
\begin{equation}\label{eq:ZT}
 Z_t^T:=\left.\frac{\,\mathrm{d} P^T}{\,\mathrm{d} P}\right|_{\mathcal F_t}
 =\frac{\pi_tP_t^T}{P_0^T},\qquad t\leq T.
\end{equation}
Following \citet{qin2017long}, the long-forward measure exists when the
restricted measures converge in total variation for every fixed $t$. The long
bond exists when there is a strictly positive semimartingale $B^\infty$ such
that $P_t^T/P_0^T$ converges to $B_t^\infty$ uniformly on bounded time
intervals in probability.

\begin{theorem}[Finite-maturity valuation and long-run pricing limits]
\label{thm:main}
There is a constant $K_\gamma<\infty$, depending only on $\gamma$. Set
\begin{equation}\label{eq:approximation-error}
 e_{t,T}:=K_\gamma
 \exp\left\{-\frac{\gamma}{2(1+\gamma)^2}
 \sigma^2A_{t,T}\right\}.
\end{equation}

\emph{(i) Finite-maturity approximation.} For every $0\leq t<T$,
\begin{equation}\label{eq:bond-approximation}
 P_t^T=e^{-r^R\tau}(1+w_t)^{-\gamma}(1+q_{t,T})
 (1+\varepsilon_{t,T}),
 \qquad
 |\varepsilon_{t,T}|\leq e_{t,T}.
\end{equation}
If $e_{t,T}<1$, then
\begin{equation}\label{eq:forward-approximation}
 \mathbb{E}^P\left[\left|\widehat Z_{t,T}-
 \frac{M_{t,T}^R+q_{t,T}M_{t,T}^N}{1+q_{t,T}}\right|
 \,\middle|\,\mathcal F_t\right]
 \leq\frac{2e_{t,T}}{1-e_{t,T}}.
\end{equation}
Both approximations are exact when $\gamma=1$.

\emph{(ii) Long-forward measure and long bond.} Suppose total disagreement
$A_T\to\infty$. Define the dogmatist's time-zero valuation weight for a
payment at $T$ by
\begin{equation}\label{eq:theta-T}
 \theta_T:=\frac{w_0^\gamma e^{-\gamma\sigma^2H_T}}
 {1+w_0^\gamma e^{-\gamma\sigma^2H_T}}.
\end{equation}
For every fixed $t$,
\begin{equation}\label{eq:fixed-t-approximation}
\mathbb{E}^P\left|Z_t^T-\bigl\{(1-\theta_T)M_t^R+\theta_TM_t^N\bigr\}\right|
 \longrightarrow0.
\end{equation}
The long-forward measure and long bond exist if and only if net disagreement
$H_T$ converges to a finite value or tends to $+\infty$ or $-\infty$. Let $h$ denote the
corresponding limit and set
\begin{equation}\label{eq:theta-h}
 \theta_h:=\begin{cases}
 0,&h=+\infty,\\
 \displaystyle\frac{w_0^\gamma e^{-\gamma\sigma^2h}}
 {1+w_0^\gamma e^{-\gamma\sigma^2h}},
   &h\in\mathbb R,\\
 1,&h=-\infty,
 \end{cases}
 \qquad M_t^h:=(1-\theta_h)M_t^R+\theta_hM_t^N.
\end{equation}
For every fixed $t$, $Z_t^T\to M_t^h$ in $L^1(P)$. The normalized bond-price
processes converge uniformly on bounded time intervals in probability. For
every $K<\infty$,
\begin{equation}\label{eq:long-bond}
 \sup_{0\leq s\leq K}\left|\frac{P_s^T}{P_0^T}-B_s^\infty\right|
 \xrightarrow[T\to\infty]{P}0,
 \qquad B_s^\infty:=\frac{M_s^h}{\pi_s},
 \qquad \pi_s=\frac{M_s^h}{B_s^\infty}.
\end{equation}
The permanent martingale component $M^h$ in the factorization of $\pi$ equals
$M^R$ when $H_T\to+\infty$ and $M^N$ when $H_T\to-\infty$. When $H_T$ has a
finite limit, both benchmark martingale components receive the positive weights
in \eqref{eq:theta-h}. If $H_T$ has no such limit, neither the long-forward
measure nor the long bond exists.

\emph{(iii) Long yield.} Under the same assumption $A_T\to\infty$, suppose the
time average of the belief distortion has the finite limit
$\bar\eta:=\lim_{T\to\infty}H_T/T$. Then
\begin{equation}\label{eq:long-yield}
 y^\infty:=-\lim_{T\to\infty}\frac1T\log P_0^T
 =r^R+\gamma\sigma^2\min\{\bar\eta,0\}.
\end{equation}
Here $y^\infty$ is the limiting continuously compounded yield implied by the
time-zero bond price $P_0^T$ as $T$ increases.
In particular, $H_T=o(T)$ gives $y^\infty=r^R$, even if the long-forward
measure and long bond fail to exist.
\end{theorem}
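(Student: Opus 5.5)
The approach is to reduce everything to the exact algebra of \eqref{eq:pricing-identities}. By Proposition \ref{prop:equilibrium} and \eqref{eq:pricing-identities}, $\pi_T/\pi_t=e^{-r^R\tau}M^R_{t,T}(1+w_T)^\gamma(1+w_t)^{-\gamma}$. The second identity in \eqref{eq:pricing-identities}, taken between $t$ and $T$, gives $M^R_{t,T}w_T^\gamma=q_{t,T}M^N_{t,T}$. Hence
\[
 M^R_{t,T}(1+w_T^\gamma)=M^R_{t,T}+q_{t,T}M^N_{t,T},
\]
whose $\mathcal F_t$-conditional mean is $1+q_{t,T}$. For $\gamma=1$ this is exactly $M^R_{t,T}(1+w_T)^\gamma$, which gives \eqref{eq:bond-approximation} with $\varepsilon\equiv0$. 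The forward density is then exactly the displayed mixture, proving exactness in (i).

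\emph{General $\gamma$.} I will use an elementary inequality: $|(1+w)^\gamma-1-w^\gamma|\le K_\gamma\, w^{\gamma/(1+\gamma)}$ for all $w>0$. Near $w=0$ the left side is $O(w^{\min(1,\gamma)})$, and near $\infty$ it is $O(w^{\max(\gamma-1,0)})$; both are dominated by $w^{\gamma/(1+\gamma)}$ at the respective ends, and I will check this case by case in $\gamma\lessgtr1$. If the exponent does not fit, I will adjust $\alpha$. Multiplying by $M^R_{t,T}$ and using $w_T^\gamma=q_{t,T}M^N_{t,T}/M^R_{t,T}$ bounds the error by $K_\gamma (M^R_{t,T})^{1-\alpha}(q_{t,T}M^N_{t,T})^{\alpha}$ with $\alpha=1/(1+\gamma)$.

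Since $\eta$ is deterministic, the conditional Hellinger affinity of the two exponential martingales is explicit:
\[
 \mathbb E\bigl[(M^R_{t,T})^{1-\alpha}(M^N_{t,T})^\alpha\mid\mathcal F_t\bigr]=\exp\{-\tfrac{\alpha(1-\alpha)}2\sigma^2A_{t,T}\},
\]
where the difference of drifts is $\sigma\eta$. With $\alpha(1-\alpha)=\gamma/(1+\gamma)^2$ this is exactly the exponent in \eqref{eq:approximation-error}. Combined with $q^\alpha\le1+q$, it yields the conditional $L^1$ error $\le e_{t,T}(1+q_{t,T})$. This gives \eqref{eq:bond-approximation} after dividing by $1+q_{t,T}$.

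For \eqref{eq:forward-approximation}, write $X=M^R_{t,T}(1+w_T)^\gamma$ and $Y=M^R_{t,T}+q_{t,T}M^N_{t,T}$. I will use
\[
 \Bigl|\frac{X}{\mathbb E X}-\frac{Y}{\mathbb E Y}\Bigr|\le\frac{|X-Y|}{\mathbb E X}+Y\frac{|\mathbb E X-\mathbb E Y|}{\mathbb E X\,\mathbb E Y},
\]
together with $\mathbb E X\ge(1-e)\mathbb E Y$, which gives $2e/(1-e)$. I expect this step, choosing the pointwise inequality with the right exponent so that the constant depends only on $\gamma$, to be the main obstacle.

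\emph{Part (ii).} Fix $t$. Using (i) at times $t$ and $0$,
\[
 \pi_tP_t^T=e^{-r^RT}(1+w_0)^{-\gamma}\bigl(M^R_t+w_0^\gamma e^{-\gamma\sigma^2H_T}M^N_t\bigr)(1+\varepsilon_{t,T}).
\]
Dividing by $P_0^T=e^{-r^RT}(1+w_0)^{-\gamma}(1+w_0^\gamma e^{-\gamma\sigma^2H_T})(1+\varepsilon_{0,T})$ gives
\[
 Z^T_t=\{(1-\theta_T)M^R_t+\theta_TM^N_t\}\frac{1+\varepsilon_{t,T}}{1+\varepsilon_{0,T}}.
\]
Since $A_{t,T}\ge A_T-A_t\to\infty$, both $e$-bounds tend to $0$ deterministically, and $\mathbb E(M^R_t+M^N_t)=2$; this yields \eqref{eq:fixed-t-approximation}.

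Now $\theta_T$ converges in $[0,1]$ iff $H_T$ converges in $[-\infty,\infty]$, and the limit is $\theta_h$. That gives $Z^T_t\to M^h_t$ in $L^1$, hence total-variation convergence. For the converse, if $H_T$ has no limit, $\theta_T$ has two distinct limit points. Because $A_T\to\infty$, $\eta$ is not a.e.\ zero on some $[0,t]$, so $M^R_t\ne M^N_t$ with positive probability and the two candidate limits differ. Hence there is no long-forward measure. A long bond would force a limit of $Z_t^T=\pi_tP_t^T/P_0^T$ in probability, which together with uniform integrability of $Z_t^T$ (bounded by $C(M^R_t+M^N_t)$) contradicts the oscillation.

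For \eqref{eq:long-bond}, note $P_s^T/P_0^T=Z^T_s/\pi_s$. Then $\sup_{s\le K}|\varepsilon_{s,T}|\le e_{K,T}\to0$, since $A_{s,T}\ge A_{K,T}$. Also $\sup_{s\le K}(M^R_s+M^N_s)/\pi_s$ is a.s.\ finite by path continuity, which gives uniform convergence in probability to $M^h_s/\pi_s$.

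\emph{Part (iii).} From the time-zero formula,
\[
 -\tfrac1T\log P_0^T=r^R-\tfrac1T\log(1+w_0^\gamma e^{-\gamma\sigma^2H_T})+o(1).
\]
The middle term tends to $\gamma\sigma^2\max\{-\bar\eta,0\}$, so $y^\infty=r^R+\gamma\sigma^2\min\{\bar\eta,0\}$. If $H_T=o(T)$, then $\bar\eta=0$ and $y^\infty=r^R$ regardless of whether $H_T$ converges.
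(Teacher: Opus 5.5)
\textbf{Gap: the single-power bound fails for large $\gamma$.} Your argument breaks at the step you flagged yourself. The inequality $|(1+w)^\gamma-1-w^\gamma|\le K_\gamma w^{\gamma/(1+\gamma)}$ is false for every $\gamma>(1+\sqrt5)/2$. For $\gamma>1$ the left side behaves like $\gamma w^{\gamma-1}$ as $w\to\infty$, and $\gamma-1>\gamma/(1+\gamma)$ exactly when $\gamma^2-\gamma-1>0$. For example, $\gamma=2$ gives $2w$ against $w^{2/3}$. Your fallback of adjusting $\alpha$ cannot rescue $\gamma>2$. A single power $w^{p}$ must satisfy $p\le1$ to dominate the left side near $0$, where it behaves like $\gamma w$, and $p\ge\gamma-1$ near $\infty$. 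For $\gamma=3$ the left side is $3w+3w^2$, which no single power controls. For $\gamma\in((1+\sqrt5)/2,2]$ some $\alpha\in[(\gamma-1)/\gamma,1/\gamma]$ does work. That interval lies inside $[1/(1+\gamma),\gamma/(1+\gamma)]$, so $\alpha(1-\alpha)\ge\gamma/(1+\gamma)^2$ and the stated $e_{t,T}$ still bounds the error, but you would need to say so. As written, part (i) is unproved for $\gamma>2$, and so is everything in (ii) and (iii) that relies on $e_{t,T}\to0$.

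\textbf{Fix: use one power at each end.} With $R_\gamma(x)=(1+x)^\gamma-1-x^\gamma$, take $|R_\gamma(x)|\le C_\gamma\bigl(x^{\gamma/(1+\gamma)}+x^{\gamma^2/(1+\gamma)}\bigr)$.
\begin{itemize}
\item The smaller exponent, $\min\{\gamma,\gamma^2\}/(1+\gamma)\le\min\{1,\gamma\}$, handles $x\to0$.
\item The larger exponent, $\max\{\gamma,\gamma^2\}/(1+\gamma)\ge\max\{\gamma-1,0\}$, handles $x\to\infty$.
\item The second term corresponds to $\alpha=\gamma/(1+\gamma)$ in your Hellinger affinity. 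Since $\alpha(1-\alpha)$ is invariant under $\alpha\mapsto1-\alpha$, both terms decay at the same rate $\exp\{-\gamma\sigma^2A_{t,T}/(2(1+\gamma)^2)\}$. This symmetry is why the exponent in $e_{t,T}$ has the form it does.
\item Then $q^{1/(1+\gamma)}+q^{\gamma/(1+\gamma)}\le2(1+q)$ gives $K_\gamma=2C_\gamma$.
\end{itemize}
This is the paper's argument, phrased there as the conditional moment of $w_T^u$ under $P^{L,R}$ at $u=\gamma/(1+\gamma)$ and $u=\gamma^2/(1+\gamma)$.

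\textbf{The rest of your proof.} With this repair the remaining steps are sound, and your part (ii) takes a different route from the paper. You divide the bond-price approximations at $t$ and $0$ to get the pathwise identity $Z_t^T=\{(1-\theta_T)M_t^R+\theta_TM_t^N\}(1+\varepsilon_{t,T})/(1+\varepsilon_{0,T})$, with deterministic bounds on the $\varepsilon$'s. This bypasses the forward-density estimate and Doob's inequality, and it even gives almost sure uniform convergence of $P_s^T/P_0^T$ on $[0,K]$. The paper instead conditions the time-zero forward-density bound on $\mathcal F_t$. That yields the $L^1$ bound $2e_{0,T}/(1-e_{0,T})$ uniformly in $t\le T$, and the paper then applies Doob's maximal inequality. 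Finally, the uniform-integrability remark in your converse is unnecessary, since limits in probability are already unique.
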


\begin{proof}
\emph{Part (i).}
The two identities in \eqref{eq:pricing-identities} imply
\begin{align*}
 P_t^T&=e^{-r^R\tau}(1+w_t)^{-\gamma}
 \mathbb{E}^{P^{L,R}}[(1+w_T)^\gamma\mid\mathcal F_t],\\
 M_{t,T}^Rw_T^\gamma&=q_{t,T}M_{t,T}^N.
\end{align*}
Under $P^{L,R}$, the process $(B_t^{L,R})_{t\geq0}$ defined by
$B_t^{L,R}:=B_t+\gamma\sigma t$ is a Brownian motion. Evaluating the moment of $w_T$ gives, for $u\in(0,\gamma)$,
\begin{equation}\label{eq:conditional-moment}
 \mathbb{E}^{P^{L,R}}[w_T^u\mid\mathcal F_t]
 =q_{t,T}^{u/\gamma}
 \exp\left\{-\frac12\frac{u}{\gamma}
 \left(1-\frac{u}{\gamma}\right)\sigma^2A_{t,T}\right\}.
\end{equation}

Let $R_\gamma(x):=(1+x)^\gamma-1-x^\gamma$. The ratio
\[
 \frac{|R_\gamma(x)|}
 {x^{\gamma/(1+\gamma)}+x^{\gamma^2/(1+\gamma)}}
\]
is continuous on $(0,\infty)$ and is bounded as $x$ approaches 0 or
$\infty$. Hence a finite constant $C_\gamma$ satisfies
\begin{equation}\label{eq:R-bound}
 |R_\gamma(x)|\leq C_\gamma
 \left(x^{\gamma/(1+\gamma)}+x^{\gamma^2/(1+\gamma)}\right),
 \qquad x>0.
\end{equation}
Applying \eqref{eq:conditional-moment} to the two powers in
\eqref{eq:R-bound} and using $x^a\leq1+x$ for $a\in(0,1)$ gives
\[
 \mathbb{E}^{P^{L,R}}[|R_\gamma(w_T)|\mid\mathcal F_t]
 \leq 2C_\gamma(1+q_{t,T})
 \exp\left\{-\frac{\gamma}{2(1+\gamma)^2}
 \sigma^2A_{t,T}\right\}.
\]
We may therefore take $K_\gamma=2C_\gamma$.

Set
\begin{equation*}
 X_{t,T}:=M_{t,T}^R(1+w_T)^\gamma,
 \qquad
 Y_{t,T}:=M_{t,T}^R+q_{t,T}M_{t,T}^N.
\end{equation*}
The preceding identities and the change of measure defined by $M^R$ yield
\begin{equation*}
 \mathbb{E}^P[Y_{t,T}\mid\mathcal F_t]=1+q_{t,T},
 \qquad
 \mathbb{E}^P[|X_{t,T}-Y_{t,T}|\mid\mathcal F_t]
 \leq e_{t,T}(1+q_{t,T}).
\end{equation*}
Define
\[
 \varepsilon_{t,T}:=
 \frac{\mathbb{E}^P[X_{t,T}\mid\mathcal F_t]
 -\mathbb{E}^P[Y_{t,T}\mid\mathcal F_t]}
 {\mathbb{E}^P[Y_{t,T}\mid\mathcal F_t]}.
\]
The preceding bound gives $|\varepsilon_{t,T}|\leq e_{t,T}$ and proves
\eqref{eq:bond-approximation}. Moreover,
\[
 \widehat Z_{t,T}
 =\frac{X_{t,T}}{\mathbb{E}^P[X_{t,T}\mid\mathcal F_t]},
 \qquad
 \frac{M_{t,T}^R+q_{t,T}M_{t,T}^N}{1+q_{t,T}}
 =\frac{Y_{t,T}}{\mathbb{E}^P[Y_{t,T}\mid\mathcal F_t]}.
\]
If $e_{t,T}<1$, then
$\mathbb{E}^P[X_{t,T}\mid\mathcal F_t]\geq
(1-e_{t,T})\mathbb{E}^P[Y_{t,T}\mid\mathcal F_t]$, and hence
\begin{align*}
 \mathbb{E}^P\left[\left|
 \frac{X_{t,T}}{\mathbb{E}^P[X_{t,T}\mid\mathcal F_t]}-
 \frac{Y_{t,T}}{\mathbb{E}^P[Y_{t,T}\mid\mathcal F_t]}
 \right|\,\middle|\,\mathcal F_t\right]
 &\leq
 \frac{2\mathbb{E}^P[|X_{t,T}-Y_{t,T}|\mid\mathcal F_t]}
 {\mathbb{E}^P[X_{t,T}\mid\mathcal F_t]}\\
 &\leq\frac{2e_{t,T}}{1-e_{t,T}}.
\end{align*}
This proves \eqref{eq:forward-approximation}. When $\gamma=1$,
$R_\gamma$ vanishes, so both approximations are exact.

\emph{Part (ii).}
The assumption $A_T\to\infty$ implies $e_{0,T}\to0$. For
$a\in[0,1]$, define
\[
 \overline{M}_t(a):=(1-a)M_t^R+aM_t^N.
\]
At time zero, the difference in \eqref{eq:forward-approximation} is
$\widehat Z_{0,T}-\overline{M}_T(\theta_T)$. Since $M^R$ and $M^N$ are martingales,
\[
 Z_t^T-\overline{M}_t(\theta_T)
 =\mathbb{E}^P[\widehat Z_{0,T}-\overline{M}_T(\theta_T)\mid\mathcal F_t].
\]
Jensen's inequality and the time-zero bound in
\eqref{eq:forward-approximation} therefore give, for every fixed $t$ and all
sufficiently large $T$,
\begin{equation}\label{eq:fixed-t-rate}
 \mathbb{E}^P|Z_t^T-\overline{M}_t(\theta_T)|
 \leq\frac{2e_{0,T}}{1-e_{0,T}}.
\end{equation}
This proves \eqref{eq:fixed-t-approximation}.

If $H_T$ has one of the three limits in the theorem, then
$\theta_T\to\theta_h$. Equation \eqref{eq:fixed-t-rate} gives
$Z_t^T\to M_t^h$ in $L^1(P)$ for every fixed $t$. Hence the forward measures
converge in total variation on every fixed $\mathcal F_t$.

We next show that $Z^T=(Z_s^T)_{0\leq s\leq T}$ converges to $M^h$ uniformly
on bounded time intervals in probability. Set
$L_s^T:=Z_s^T-\overline M_s(\theta_T)$ for $0\leq s\leq T$. The conditional
expectation identity above shows that $L^T$ is a martingale. For $T>K$, Doob's
maximal inequality gives, for every $\epsilon>0$,
\[
 P\left(\sup_{0\leq s\leq K}|L_s^T|>\epsilon\right)
 \leq\frac{\mathbb{E}^P|L_K^T|}{\epsilon}
 \leq\frac{2e_{0,T}}{\epsilon(1-e_{0,T})}
 \longrightarrow0.
\]
Moreover,
\[
 \sup_{0\leq s\leq K}
 |\overline M_s(\theta_T)-M_s^h|
 =|\theta_T-\theta_h|\sup_{0\leq s\leq K}|M_s^N-M_s^R|
 \longrightarrow0
 \qquad P\text{-a.s.}
\]
Thus $Z^T\to M^h$ uniformly on bounded time intervals in probability. Since
$\pi$ is continuous and strictly positive,
\[
 \sup_{0\leq s\leq K}\left|\frac{P_s^T}{P_0^T}-\frac{M_s^h}{\pi_s}\right|
 \leq \sup_{0\leq s\leq K}\frac1{\pi_s}
       \sup_{0\leq s\leq K}|Z_s^T-M_s^h|
 \xrightarrow[T\to\infty]{P}0.
\]
This proves \eqref{eq:long-bond}.
The corresponding factorization follows from
Lemma \ref{lem:factorizations}. Its
coefficients are $(1,e^{-\gamma\sigma^2h})$ for finite $h$, $(1,0)$ when
$H_T\to+\infty$, and $(0,1)$ when $H_T\to-\infty$.

We next show that convergence of the forward measures requires $\theta_T$ to
converge. Since $A_T\to\infty$, there is a time $t_0$ with $A_{t_0}>0$. Therefore, $\log(M_{t_0}^N/M_{t_0}^R)$ is normally distributed with
variance $\sigma^2A_{t_0}>0$. Hence $M_{t_0}^R$ and $M_{t_0}^N$ differ with positive
probability. Consequently,
\begin{equation}\label{eq:weight-distance}
 \mathbb{E}^P|\overline{M}_{t_0}(a)-\overline{M}_{t_0}(b)|
 =d_{t_0}|a-b|,
 \qquad
 d_{t_0}:=\mathbb{E}^P|M_{t_0}^N-M_{t_0}^R|>0.
\end{equation}
If the forward measures converge in total variation on $\mathcal F_{t_0}$,
then $Z_{t_0}^T$ converges in $L^1(P)$. If the normalized bond-price processes
converge uniformly on bounded time intervals in probability, then
$P_{t_0}^T/P_0^T$ converges in probability, and so does
$Z_{t_0}^T=\pi_{t_0}P_{t_0}^T/P_0^T$. In either case,
\eqref{eq:fixed-t-rate} implies that
$\overline M_{t_0}(\theta_T)$ converges in probability to the same limit.

If $\theta_T$ failed to converge, there would be sequences $T_n,S_n\to\infty$
along which $\theta_{T_n}\to a$ and $\theta_{S_n}\to b$ for distinct
$a,b\in[0,1]$. The corresponding weighted combinations would converge almost
surely to $\overline M_{t_0}(a)$ and $\overline M_{t_0}(b)$. Both random
variables would equal the common probability limit.
Equation \eqref{eq:weight-distance} would
then imply $a=b$, a contradiction. Hence $\theta_T$ converges.
Equation \eqref{eq:theta-T} shows that this occurs exactly when $H_T$
converges to a finite value or tends to $+\infty$ or $-\infty$.

\emph{Part (iii).}
At time zero, \eqref{eq:bond-approximation} gives
\begin{equation}\label{eq:yield-expansion}
 -\frac1T\log P_0^T
 =r^R+\frac{\gamma\log(1+w_0)-
 \log(1+w_0^\gamma e^{-\gamma\sigma^2H_T})}{T}+o(1).
\end{equation}
The error term vanishes because $e_{0,T}\to0$. If
$H_T/T\to\bar\eta$, then
\[
 \frac1T\log(1+w_0^\gamma e^{-\gamma\sigma^2H_T})
 \longrightarrow\max\{-\gamma\sigma^2\bar\eta,0\}.
\]
Substitution in \eqref{eq:yield-expansion} proves
\eqref{eq:long-yield}.
\end{proof}

At finite maturities, the approximation error falls as total disagreement
$A_{t,T}$ grows. The relative weights of the benchmark valuation laws depend on
net disagreement $H_{t,T}$. The long yield depend only on the average $H_T/T$.
This difference explains why repeated changes in net disagreement can keep
valuation weights dependent on maturity after the long yield has converged.
Where a dogmatist investor has a constant forecast error from the rational investor, $\eta_t:=\eta$, $H_T=\eta T$, the sign of the $\eta$, the direction of the forecast error, determines the valuation law associated with long-run pricing. 

\section{Finite-maturity pricing under time-varying beliefs}\label{sec:finite}

We now study how belief disagreement affects finite-maturity bond prices. For
tractability, we set $\gamma=1$ and assume that $\eta_t$ is continuous. Under this
preference specification, the two approximations in
Theorem \ref{thm:main}(i) are exact. The cumulative forecast difference over $[t,T]$ is $G_{t,T}$.
\begin{equation}\label{eq:log-definitions}
 \delta_t:=\sigma^2\eta_t,
 \qquad G_{t,T}:=\int_t^T\delta_s\,\mathrm{d} s.
\end{equation}
The dogmatist's current consumption share, the valuation odds in
\eqref{eq:q-conditional}, and the maturity-$T$ valuation weight are
\begin{equation}\label{eq:log-weights}
 s_t:=\frac{w_t}{1+w_t},
 \qquad
 q_{t,T}=w_te^{-G_{t,T}},
 \qquad \theta_{t,T}:=\frac{q_{t,T}}{1+q_{t,T}}.
\end{equation}

For $T>t$, let $y_t^T:=-(T-t)^{-1}\log P_t^T$ denote the yield and let
$f_t(T):=-\partial_T\log P_t^T$ denote the instantaneous forward rate. The diffusion term in the bond return
$\,\mathrm{d} P_t^T/P_t^T$ is $\beta_t^T\,\mathrm{d} B_t$. Thus $|\beta_t^T|$ is the
instantaneous bond volatility, and its sign gives the direction of the bond's
response to an endowment shock. The coefficient $\mu_t^T$ is the expected
growth rate of aggregate endowment under the maturity-$T$ forward measure.

\begin{proposition}
\label{prop:log-formulas}
Suppose $\gamma=1$, so $r^R=\rho+\mu-\sigma^2$. For every $0\leq t<T$,
\begin{align}
 P_t^T&=e^{-r^R(T-t)}\frac{1+q_{t,T}}{1+w_t},
 \label{eq:log-bond}\\
 \widehat Z_{t,T}&=(1-\theta_{t,T})M_{t,T}^R
 +\theta_{t,T}M_{t,T}^N,
 \label{eq:log-density}\\
 y_t^T-r^R&=\frac{\log(1+w_t)-\log(1+q_{t,T})}{T-t},
 \label{eq:log-yield}\\
 f_t(T)-r^R&=\delta_T\theta_{t,T},
 \label{eq:log-forward-rate}\\
 \beta_t^T&=\eta_t\sigma(\theta_{t,T}-s_t),
 \label{eq:log-bond-volatility}\\
 \mu_t^T&=\mu-\sigma^2+\delta_t\theta_{t,T}.
 \label{eq:log-forward-drift}
\end{align}
Under the maturity-$T$ forward law, endowment satisfies
$\,\mathrm{d} D_t/D_t=\mu_t^T\,\mathrm{d} t+\sigma\,\mathrm{d} B_t^T$.
Here $(B_t^T)_{t\geq0}$ is a Brownian motion under that law.
\end{proposition}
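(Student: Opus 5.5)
The first two formulas come straight from Theorem \ref{thm:main}(i). With $\gamma=1$ we have $R_\gamma\equiv0$, so $\varepsilon_{t,T}=0$, and \eqref{eq:bond-approximation} becomes \eqref{eq:log-bond}. The identity $\widehat Z_{t,T}=X_{t,T}/\mathbb{E}^P[X_{t,T}\mid\mathcal F_t]$ with $X_{t,T}=Y_{t,T}$ gives \eqref{eq:log-density}, since $q_{t,T}/(1+q_{t,T})=\theta_{t,T}$. For $\gamma=1$, \eqref{eq:q-conditional} reads $q_{t,T}=w_te^{-\sigma^2H_{t,T}}=w_te^{-G_{t,T}}$, which matches \eqref{eq:log-weights}. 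The yield \eqref{eq:log-yield} follows by taking logs in \eqref{eq:log-bond} and dividing by $T-t$.

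For the forward rate, I would differentiate $\log P_t^T=-r^R(T-t)+\log(1+q_{t,T})-\log(1+w_t)$ in $T$. Continuity of $\eta$ gives $\partial_TG_{t,T}=\delta_T$, so $\partial_Tq_{t,T}=-\delta_Tq_{t,T}$. Hence $f_t(T)=r^R+\delta_Tq_{t,T}/(1+q_{t,T})=r^R+\delta_T\theta_{t,T}$.

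For the volatility, I would apply Itô's formula in $t$ to \eqref{eq:log-bond}, keeping only the martingale parts. From \eqref{eq:sharing} and \eqref{eq:likelihood}, $w_t=w_0\xi_t$, so $\mathrm{d}w_t=w_t\sigma\eta_t\,\mathrm{d}B_t$. Since $G_{t,T}$ is deterministic in $t$, the diffusion of $q_{t,T}=w_te^{-G_{t,T}}$ is $q_{t,T}\sigma\eta_t\,\mathrm{d}B_t$. The diffusion coefficient of $\log P_t^T$ is then
\[
\sigma\eta_t\Bigl(\frac{q_{t,T}}{1+q_{t,T}}-\frac{w_t}{1+w_t}\Bigr)=\sigma\eta_t(\theta_{t,T}-s_t),
\]
and it equals the diffusion coefficient of $\mathrm{d}P_t^T/P_t^T$. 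This gives \eqref{eq:log-bond-volatility}.

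For the forward drift, the forward measure $P^T$ has density process $Z_t^T=\pi_tP_t^T/P_0^T$ by \eqref{eq:ZT}. Its stochastic logarithm has volatility equal to the volatility of $\pi_t$ plus $\beta_t^T$. From \eqref{eq:sdf} with $\gamma=1$, $\pi_t\propto e^{-\rho t}D_t^{-1}(1+w_t)$, so the diffusion coefficient of $\log\pi_t$ is $-\sigma+\sigma\eta_ts_t$. Adding $\beta_t^T$ gives the kernel $\kappa_t=-\sigma+\sigma\eta_t\theta_{t,T}$. By Girsanov, $B_t^T:=B_t-\int_0^t\kappa_u\,\mathrm{d}u$ is a $P^T$-Brownian motion. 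Substituting $\mathrm{d}B_t=\mathrm{d}B_t^T+\kappa_t\,\mathrm{d}t$ into \eqref{eq:endowment} gives drift $\mu+\sigma\kappa_t=\mu-\sigma^2+\delta_t\theta_{t,T}$, which is \eqref{eq:log-forward-drift}.

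The main obstacle is the Girsanov step. Applying it requires $Z^T$ to be a true martingale on $[0,T]$, and this holds because $Z_t^T$ is a conditional expectation of $\widehat Z_{0,T}$. The remaining point is sign bookkeeping in $\kappa_t$. I would cross-check the result by computing the forward drift directly from \eqref{eq:log-density}. Under $M^R$ the drift shift is $-\gamma\sigma=-\sigma$, and under $M^N$ it is $\sigma(\eta-1)$. Mixing these with weight $\theta_{t,T}$ gives $-\sigma+\sigma\eta_t\theta_{t,T}$, which agrees with $\kappa_t$.
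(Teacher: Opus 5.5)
Your proposal is correct and follows the paper's proof step for step. You set $R_\gamma\equiv0$ at $\gamma=1$ in Theorem \ref{thm:main}(i) to get the bond price and forward density, take logarithms and differentiate in $T$, apply It\^o's formula to \eqref{eq:log-bond} using the common relative diffusion $\sigma\eta_t$ of $w_t$ and $q_{t,T}$, and apply Girsanov with the diffusion coefficient $-\sigma+\sigma\eta_t\theta_{t,T}$ of $Z^T$. You also supply two details the paper leaves implicit: that $Z^T$ is a true martingale because it is a conditional expectation of $\widehat Z_{0,T}$, and the cross-check of the drift through the mixture density.
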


\begin{proof}
For $\gamma=1$, the remainder in Theorem \ref{thm:main}(i) vanishes, which
gives \eqref{eq:log-bond} and \eqref{eq:log-density}. Taking logarithms of
\eqref{eq:log-bond} gives \eqref{eq:log-yield}. Differentiation with respect
to $T$ gives \eqref{eq:log-forward-rate}. The diffusion coefficients of $w_t$
and $q_{t,T}$ are both $\eta_t\sigma$. It\^o's formula applied to
\eqref{eq:log-bond} gives \eqref{eq:log-bond-volatility}. Under $P$, $Z^T$ has
diffusion coefficient $-\sigma+\eta_t\sigma\theta_{t,T}$. Applying Girsanov's theorem
then gives \eqref{eq:log-forward-drift}.
\end{proof}

Proposition \ref{prop:log-formulas} shows that the same valuation weight
$\theta_{t,T}$ enters the yield, the forward rate at $T$, the bond's exposure
to current endowment shocks, and the forward-law growth rate.

Set $s_0=5\%$ and let pessimism fade according to
\begin{equation}\label{eq:numerical-path}
 \eta_t=\frac{\eta_0}{\sqrt{1+t}},
 \qquad
 H_T=2\eta_0(\sqrt{1+T}-1),
 \qquad
 G_{0,T}=\sigma^2H_T,
 \qquad \eta_0<0.
\end{equation}
Here $\delta_t=\sigma^2\eta_t$ is the difference between dogmatist's expected-growth forecast and the rational investor's forecast. The five cases are ordered by
$|\delta_0|$. Along each path, $A_T\to\infty$, $H_T\to-\infty$, and
$H_T/T\to0$. The dogmatist therefore disappears from consumption, her
benchmark valuation law describes the long-forward limit, and the long yield
equals $r^R$.

Panels (a) and (b) of Figure \ref{fig:numerical} show how fading pessimism
affects the dogmatist's valuation weight and the yield spread. The valuation
weight moves above the dogmatist's time-zero consumption share and increases
with maturity. A more negative $\eta_0$ and a higher $\sigma$ strengthen this
response through $\sigma^2H_T$. Milder pessimism and lower volatility keep the
valuation weight closer to $s_0$. The yield spread is negative since pessimism
raises the value of future consumption. The cumulative forecast difference is
of order $\sqrt T$. Its effect on the annualized yield vanishes as
$T\to\infty$, so the yield spread converges to zero.

Table \ref{tab:finite-comparisons} reports the bond-price premium
$\Delta P_{0,T}:=e^{r^RT}P_0^T-1$ and the time-zero value of the dogmatist's
consumption after $T$, denoted by
$\mathcal W_N(T):=\mathbb{E}^P[\int_T^\infty\pi_tc_{N,t}\,\mathrm{d} t]$. Under $\gamma=1$,
$\mathcal W_N(T)/(W_R+W_N)=s_0e^{-\rho T}$. We use $\rho=2\%$ for this
calculation.

Alternating paths provide a reduced-form representation of procyclical growth
expectations that adjust slowly at business-cycle turning points.
\citet{dovern2017systematic} find that professional growth forecasts are
systematically too optimistic for recessions and too pessimistic for
recoveries. To study amplitude decay, set $\sigma=15\%$, $L=20$ years, and
$\omega=2\pi/L$. Consider
\begin{equation}\label{eq:alternative-paths}
 \eta_t^{\mathrm{FA}}=\frac{2\sin(\omega t)}{\sqrt{1+t}},
 \qquad
 \eta_t^{\mathrm{CA}}=2\cos(\omega t).
\end{equation}
Both paths have $A_T\to\infty$, so the dogmatist disappears from consumption.
Under fading alternation, $G_{0,T}^{\mathrm{FA}}$ converges. Her valuation
weight approaches a value strictly between zero and one, and the long-forward
measure assigns positive weight to each benchmark valuation law.

Under constant-amplitude cycles, the cumulative forecast difference is
\begin{equation}\label{eq:cyclical-cumulative-forecast-difference}
 G_{0,T}^{\mathrm{CA}}=\frac{2\sigma^2}{\omega}\sin(\omega T).
\end{equation}
This quantity is bounded and has no limit. The valuation weight inherits
this maturity dependence, so the long-forward measure and long bond fail to
converge. Since $G_{0,T}^{\mathrm{CA}}/T\to0$, the long yield equals $r^R$.

Panels (c) and (d) of Figure \ref{fig:numerical} trace the finite-maturity
implications. The fading path approaches a stable valuation weight.
Constant-amplitude cycles keep the valuation weight oscillating as the yield
spread approaches zero.

\begin{figure}[!ht]
\centering
\includegraphics[width=0.95\textwidth]{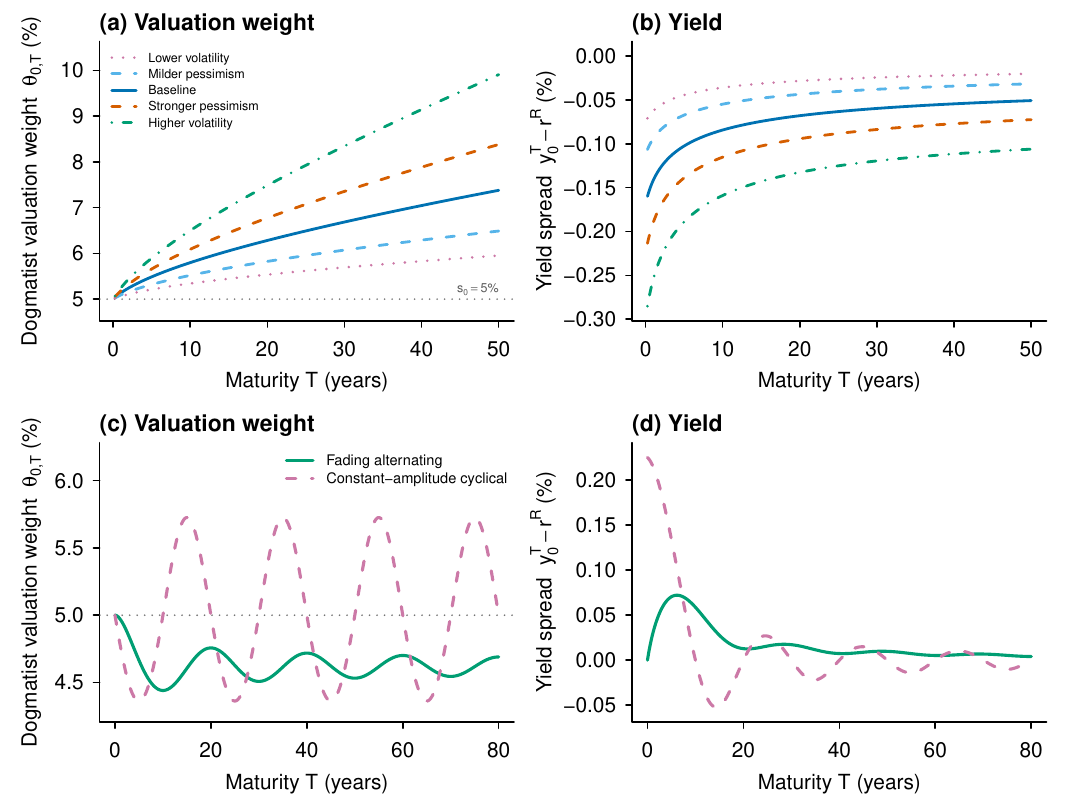}
\caption{Finite-maturity pricing under fading and alternating disagreement.
Panel (a) varies $\eta_0$ with $\sigma=15\%$. Panel (b) varies
$\sigma$ with $\eta_0=-1.5$. Panels (c) and (d) show the paths in
\eqref{eq:alternative-paths}, with $\sigma=15\%$ and $L=20$ years. The
parameter values for the named cases appear in Table \ref{tab:finite-comparisons}.
The time-zero consumption share is $s_0=5\%$. The dotted line in each
valuation-weight panel marks this benchmark.}
\label{fig:numerical}
\end{figure}

\begin{table}[!ht]
\centering
\caption{Finite-maturity effects of fading pessimism. Panel A compares the five
cases. Panel B reports additional quantities for the baseline. Valuation and
pricing quantities are in percentage units. We use $s_0=5\%$ and $\rho=2\%$.}
\label{tab:finite-comparisons}
\footnotesize
\textit{Panel A. Valuation weights and yields}\par
\setlength{\tabcolsep}{3.5pt}
\begin{tabular}{lccrrrrrr}
\toprule
&&& \multicolumn{3}{c}{Valuation weight $\theta_{0,T}$} &
\multicolumn{3}{c}{Yield spread $y_0^T-r^R$} \\
\cmidrule(lr){4-6}\cmidrule(lr){7-9}
Case & $\eta_0$ & $\sigma$ & $T=10$ & $T=20$ & $T=30$ &
$T=10$ & $T=20$ & $T=30$ \\
\midrule
Lower volatility    & $-1.50$ & $10\%$ & 5.34 & 5.54 & 5.69 & $-0.036$ & $-0.028$ & $-0.024$ \\
Milder pessimism    & $-1.00$ & $15\%$ & 5.52 & 5.82 & 6.07 & $-0.055$ & $-0.044$ & $-0.038$ \\
Baseline            & $-1.50$ & $15\%$ & 5.80 & 6.28 & 6.69 & $-0.084$ & $-0.068$ & $-0.060$ \\
Stronger pessimism  & $-2.00$ & $15\%$ & 6.09 & 6.77 & 7.36 & $-0.115$ & $-0.094$ & $-0.084$ \\
Higher volatility   & $-1.50$ & $20\%$ & 6.50 & 7.48 & 8.35 & $-0.159$ & $-0.133$ & $-0.120$ \\
\bottomrule
\end{tabular}

\smallskip
\textit{Panel B. Baseline economic magnitudes}\par
\setlength{\tabcolsep}{5pt}
\begin{tabular}{rrrrrr}
\toprule
$T$ & \shortstack{Bond-price\\premium\\$\Delta P_{0,T}$}
& \shortstack{Forward-rate\\spread\\$f_0(T)-r^R$}
& \shortstack{Bond\\volatility\\$|\beta_0^T|$}
& \shortstack{Forward-law\\growth shift\\$\mu_0^T-(\mu-\sigma^2)$}
& \shortstack{Future-consumption\\value\\$\mathcal W_N(T)/(W_R+W_N)$} \\
\midrule
10 & 0.846 & $-0.059$ & 0.179 & $-0.196$ & 4.094 \\
20 & 1.368 & $-0.046$ & 0.288 & $-0.212$ & 3.352 \\
30 & 1.806 & $-0.041$ & 0.379 & $-0.226$ & 2.744 \\
\bottomrule
\end{tabular}
\end{table}

\FloatBarrier

Panel B shows that narrow yield spreads can coexist with rising bond-price
premiums, measurable bond exposures, and future consumption worth several
percent of initial aggregate wealth.
\section{Conclusion}\label{sec:scope}

Physical extinction does not identify whose belief shapes long-run pricing.
Total disagreement can drive the dogmatist's consumption share to zero. Net
disagreement determines her contribution to distant bond valuation because
optimistic and pessimistic forecast errors can offset one another.

A finite limit of net disagreement gives a stable valuation law that reflects
both investors. Constant-amplitude cycles prevent the long-forward measure and
long bond from existing even though the long yield converges. At finite
maturities, valuation weights and bond exposures can change when yield
differences are small. The long yield therefore cannot reveal whether the
valuation law and normalized bond prices converge.

\noindent\textbf{Use of artificial intelligence.} The author used ChatGPT to
assist with restructuring the article and language polishing. The author
assumes responsibility for all content.

\begingroup
\footnotesize
\setlength{\bibsep}{0pt}
\bibliographystyle{abbrvnat}
\bibliography{Paper}

@article{blume2006smart,
  author  = {Blume, Lawrence and Easley, David},
  title   = {If You're So Smart, Why Aren't You Rich? Belief Selection in Complete and Incomplete Markets},
  journal = {Econometrica},
  year    = {2006},
  volume  = {74},
  number  = {4},
  pages   = {929--966},
  doi     = {10.1111/j.1468-0262.2006.00691.x}
}

@article{borovicka2020survival,
  author  = {Borovi{\v{c}}ka, Jaroslav},
  title   = {Survival and Long-Run Dynamics with Heterogeneous Beliefs under Recursive Preferences},
  journal = {Journal of Political Economy},
  year    = {2020},
  volume  = {128},
  number  = {1},
  pages   = {206--251},
  doi     = {10.1086/704072}
}

@article{borovicka2016misspecified,
  author  = {Borovi{\v{c}}ka, Jaroslav and Hansen, Lars Peter and Scheinkman, Jos{\'e} A.},
  title   = {Misspecified Recovery},
  journal = {The Journal of Finance},
  year    = {2016},
  volume  = {71},
  number  = {6},
  pages   = {2493--2544},
  doi     = {10.1111/jofi.12404}
}

@article{cvitanic2011price,
  author  = {Cvitani{\'c}, Jak{\v{s}}a and Malamud, Semyon},
  title   = {Price Impact and Portfolio Impact},
  journal = {Journal of Financial Economics},
  year    = {2011},
  volume  = {100},
  number  = {1},
  pages   = {201--225},
  doi     = {10.1016/j.jfineco.2010.11.001}
}

@article{cvitanic2012financial,
  author  = {Cvitani{\'c}, Jak{\v{s}}a and Jouini, Ely{\`e}s and Malamud, Semyon and Napp, Clotilde},
  title   = {Financial Markets Equilibrium with Heterogeneous Agents},
  journal = {Review of Finance},
  year    = {2012},
  volume  = {16},
  number  = {1},
  pages   = {285--321},
  doi     = {10.1093/rof/rfr018}
}

@article{dovern2017systematic,
  author  = {Dovern, Jonas and Jannsen, Nils},
  title   = {Systematic Errors in Growth Expectations over the Business Cycle},
  journal = {International Journal of Forecasting},
  year    = {2017},
  volume  = {33},
  number  = {4},
  pages   = {760--769},
  doi     = {10.1016/j.ijforecast.2017.03.003}
}

@article{hansen2009long,
  author  = {Hansen, Lars Peter and Scheinkman, Jos{\'e} A.},
  title   = {Long-Term Risk: An Operator Approach},
  journal = {Econometrica},
  year    = {2009},
  volume  = {77},
  number  = {1},
  pages   = {177--234},
  doi     = {10.3982/ECTA6761}
}

@article{kogan2006price,
  author  = {Kogan, Leonid and Ross, Stephen A. and Wang, Jiang and Westerfield, Mark M.},
  title   = {The Price Impact and Survival of Irrational Traders},
  journal = {The Journal of Finance},
  year    = {2006},
  volume  = {61},
  number  = {1},
  pages   = {195--229},
  doi     = {10.1111/j.1540-6261.2006.00834.x}
}

@article{malamud2008long,
  author  = {Malamud, Semyon},
  title   = {Long Run Forward Rates and Long Yields of Bonds and Options in Heterogeneous Equilibria},
  journal = {Finance and Stochastics},
  year    = {2008},
  volume  = {12},
  number  = {2},
  pages   = {245--264},
  doi     = {10.1007/s00780-007-0058-0}
}

@article{qin2016positive,
  author  = {Qin, Likuan and Linetsky, Vadim},
  title   = {Positive Eigenfunctions of {Markovian} Pricing Operators: {Hansen--Scheinkman} Factorization, {Ross} Recovery, and Long-Term Pricing},
  journal = {Operations Research},
  year    = {2016},
  volume  = {64},
  number  = {1},
  pages   = {99--117},
  doi     = {10.1287/opre.2015.1449}
}

@article{qin2017long,
  author  = {Qin, Likuan and Linetsky, Vadim},
  title   = {Long-Term Risk: A Martingale Approach},
  journal = {Econometrica},
  year    = {2017},
  volume  = {85},
  number  = {1},
  pages   = {299--312},
  doi     = {10.3982/ECTA13438}
}

@article{qin2018hjm,
  author  = {Qin, Likuan and Linetsky, Vadim},
  title   = {Long-Term Factorization in {Heath--Jarrow--Morton} Models},
  journal = {Finance and Stochastics},
  year    = {2018},
  volume  = {22},
  number  = {3},
  pages   = {621--641},
  doi     = {10.1007/s00780-018-0365-7}
}

@article{sandroni2000markets,
  author  = {Sandroni, Alvaro},
  title   = {Do Markets Favor Agents Able to Make Accurate Predictions?},
  journal = {Econometrica},
  year    = {2000},
  volume  = {68},
  number  = {6},
  pages   = {1303--1341},
  doi     = {10.1111/1468-0262.00163}
}

@article{yan2008natural,
  author  = {Yan, Hongjun},
  title   = {Natural Selection in Financial Markets: Does It Work?},
  journal = {Management Science},
  year    = {2008},
  volume  = {54},
  number  = {11},
  pages   = {1935--1950},
  doi     = {10.1287/mnsc.1080.0911}
}
\endgroup

\end{document}